\documentclass[11pt]{article}
\usepackage{amsmath,amsthm,amsfonts,amscd,eucal,latexsym,amssymb,mathrsfs}
\usepackage{amsxtra, calc, bbm}
\usepackage{epsfig}  
\usepackage[latin1]{inputenc}
\usepackage{verbatim} 
\oddsidemargin 0cm      
\evensidemargin 0cm     
\headheight 0pt         
\headsep 0pt            
\textheight 20cm        
\textwidth 16cm         

\def\cA{{\ca A}}

\def\cC{{\ca C}}
\def\cD{{\ca D}}

\def\cH{{\ca H}}

\def\cO{{\ca O}}

\def\cR{{\ca R}}

\def\mA{{\mathcal A}}


\def\bR{{\mathbb R}}

\def\bS{{\mathbb S}}


\def\beq{\begin{eqnarray}}
\def\eeq{\end{eqnarray}}
\def\pa{\partial}
\def\at{\left(}               
\def\aq{\left[}               

\def\ct{\right)}              
\def\cq{\right]}              
\newcommand{\ca}[1]{{\cal #1}}         

\def\be{\beta}
\def\ga{\gamma}
\def\de{\delta}

\def\la{\lambda}

\def\si{\sigma}
\def\om{\omega}

\def\De{\Delta}

\def\Si{\Sigma}
\def\Om{\Omega}

\def\m{\mu}
\def\n{\nu}
\newcommand{\ahat}{\check}

\newcommand{\sE}{\mathscr{E}}
\newcommand{\sZ}{\mathscr{Z}}

\newtheorem{theorem}{Theorem}[section]
\newtheorem{proposition}{Proposition}[section]

\numberwithin{equation}{section}

\newcommand{\se}[1]{\section{#1}}

\def\vsp{\vspace{0.2cm}}
\def\vspp{\vspace{0.1cm}}

\def\sse #1 {\vsp\ifhmode{\par}\fi\refstepcounter{subsection}
  \noindent {\bf\thesubsection}. {\em #1}.\quad
  \addcontentsline{toc}{subsection}{\protect\numberline{\thesubsection} #1}%
  }

\def\ssb #1 {\vsp\ifhmode{\par}\fi\refstepcounter{subsection}
  \noindent {\bf\thesubsection.} {\bf #1.}\quad
  \addcontentsline{toc}{subsection}{\protect\numberline{\thesubsection} #1}%
  }

\def\ssa #1 {\ifhmode{\par}\fi\refstepcounter{subsection}
  \noindent {\bf\thesubsection.} {\bf #1.}\quad
  \addcontentsline{toc}{subsection}{\protect\numberline{\thesubsection} #1}%
  }

\def\remark #1 {\vsp\vspp\ifhmode{\par}\fi\noindent\noindent {\bf Remark.} {#1}\vsp\vspp\par}
\def\remarks #1 {\vsp\vspp\ifhmode{\par}\fi\noindent\noindent {\bf Remarks.} {#1}\vsp\vspp\par}


\begin{document}


\par
\bigskip
\LARGE
\noindent
{\bf On Quantum Spacetime and the horizon problem }
\bigskip
\par
\rm
\normalsize


\large
\noindent {\bf Sergio Doplicher$^{1}$}, {\bf Gerardo Morsella$^{2}$}
and {\bf Nicola Pinamonti$^{3}$}\\
\par
\small
\noindent $^1$
Dipartimento di Matematica, Universit\`a di Roma ``La Sapienza'', Piazzale Aldo Moro, 5,
I-00185 Roma, Italy, email dopliche@mat.uniroma1.it.\smallskip

\noindent $^2$
Dipartimento di Matematica, Universit\`a di Roma ``Tor Vergata'', Via della Ricerca Scientifica,
I-00133 Roma, Italy, email morsella@mat.uniroma2.it.\smallskip

\noindent $^3$
Dipartimento di Matematica, Universit\`a di Genova, Via Dodecaneso, 35, 
I-16146 Genova, Italy, email pinamont@dima.unige.it.\smallskip

 \normalsize
\par
\medskip

\rm\normalsize
\noindent {\small \today}

\rm\normalsize


\par
\bigskip

\noindent
\small
{\bf Abstract}.

In the special case of a spherically symmetric solution of Einstein equations coupled to a scalar massless field, we examine the consequences on the exact solution imposed by a semiclassical treatment of gravitational interaction when the scalar field is quantized. In agreement with~\cite{DFR}, imposing the {\it  principle of gravitational stability against localization of events}, we find that  the region where an event is localized, or where initial conditions can be assigned, has a minimal extension, of the order of the Planck length. This conclusion, though limited to the case of spherical symmetry, is more general than that of~\cite{DFR} since it does not require the use of the notion of energy through the Heisenberg Principle, nor of any approximation as the linearized Einstein equations. 

We shall then describe the influence of this minimal length scale in a cosmological model, 
namely a simple universe filled with radiation, which is effectively described by a conformally coupled scalar field in a conformal KMS state. 
Solving the backreaction, a power law inflation scenario appears close to the initial singularity.  Furthermore, the initial singularity becomes light like and thus  the standard horizon problem is avoided in this simple model. 
This indication goes in the same direction as those drawn at a heuristic level from a full use of the  {\it  principle of gravitational stability against localization of events}, which point to a {\it background dependence of the effective
Planck length}, through which a-causal effects may be transmitted.

\normalsize

\tableofcontents

\se{Introduction}
At large scales spacetime is a pseudo-Riemaniann manifold locally modeled
on Minkowski space. But it is well known that the concurrence of the principles of Quantum
Mechanics and of Classical General Relativity renders this picture
untenable in the small.

Indeed, when we describe the localization of an event by a {\it  point} in a classical manifold we implicitly assume that (in any chart) the coordinates of that point can be simultaneously measured with arbitrarily high precision.
However, high precision in the measurement of at least one coordinate requires, by Heisenberg Uncertainty Principle, the transfer to our observed system of a correspondingly high amount of energy.  If   {\it  at least one} of the coordinates is measured with a high uncertainty, say $L$,
that energy could spread uniformly in a region which, as $L$ increases, becomes infinitely extended in one direction; so that the density of the transferred energy would tend to zero.
Furthermore, if  {\it  all} space uncertainties are kept bounded, and any of the spacetime uncertainties  
is taken smaller and smaller, say of order $l$, the energy transferred by our localization measurement would
increase unlimitedly while remaining concentrated in the fixed bounded region where the event is supposed to be localized. 
If we take into account, together with the principles of Quantum Mechanics, also those of Classical General Relativity, we see that in the second case, for very small values of $l$, a closed trapped surface would be formed, hiding the supposed localization region to any distant observer. If we require that the coordinate uncertainties refer to our  {\it actual observations}, they must be constrained by   {\it uncertainty relations}. 
In the first case, the classical Newton potential generated by the energy transferred by our localization experiment would tend to zero everywhere as $L$ tends to infinity. We cannot expect that neither General Relativistic corrections nor Quantum Gravity effects may be relevant in this case, and we must conclude that {\it a single coordinate can always be measured with arbitrary precision}, provided the precision in the measurement of some other coordinate is sufficiently loose.\footnote{This statement is apparently in conflict with the  GUP expressed by the famous Amati-Ciafaloni-Veneziano uncertainty relations~\cite{ACV}; but the conflict is only apparent,  for the derivations of the ACV relations implicitly assume that the space uncertainty refers to  the value of  {\it all} the space coordinates.} 
 
 In this discussion we adopted the following  {\it Principle of Gravitational
Stability against localization of events}:
 {\bigskip}

{\it The gravitational field generated by the concentration of energy
required by the Heisenberg Uncertainty Principle to localize an event in
spacetime should not be so strong to hide the event itself to any distant
observer - distant compared to the Planck scale.}
{\bigskip}

In \cite{DFR} this principle was used to deduce Spacetime  Uncertainty Relations 
(STUR), where the condition that closed trapped surfaces are not formed was imposed 
using some drastic simplifications: this condition was taken in the simple 
form that $g_ {00}$ should remain positive, where the 
metric   $g_ {\mu\nu}$ was approximated by the solution of the linearized Einstein equations, with source describing the result of a localization experiment in a free field model.
Despite these simplifications, the STUR deduced in \cite{DFR} were seen to 
be compatible with the indications deduced from exact solutions, like 
Schwarzschild or Kerr solutions of Einstein equations. Nevertheless, since 
that time, the need was felt, and often pointed out (see, 
e.g.,~\cite{KARPACZ2001}), of an 
argument free of those sharp approximations.

This concerns especially the linear approximation to the Einstein equations, for the limitations imposed are relevant precisely in the extremely relativistic region where that approximation is no longer valid.
Another important limitation was represented by the use of the notion of  {\it  energy},  through the Heisenberg Principle, which is not defined in a satisfactory way in the General Relativistic context.
In this paper we overcome these two difficulties at the price of limiting ourselves 
to the consideration of spherically symmetric solutions only.\footnote{For an 
independent approach not restricted to spherically symmetric situations, which 
makes appeal to the Hoop Conjecture as a criterion of formation of trapped 
surfaces and to a notion of general relativistic quasi-local energy, 
see~\cite{TV}; the conclusions found there are in agreement with those of 
\cite{DFR} in the case of flat Minkowski space.} Thus, in 
particular, here the space uncertainties take all the same 
value, which, in the special solutions we consider, agrees also with the time 
uncertainty.

Our model, studied in Section~\ref{sec:trapped}, is the theory of a massless quantum field interacting with the {\it classical}  gravitational field, whose source is taken to be the expectation values of the quantum energy momentum tensor in appropriate states.

Our main instruments come from two quite distinct areas:
	
\begin{itemize}
	\item Quantum field theory on curved backgrounds, 
especially in the algebraic formulation~\cite{Wa94, BF}.
   	\item The Raychaudhuri equation for the congruence of null geodesics, providing a rigorous criterion for the formation of trapped surfaces~\cite{HL,Wald}.
	\end{itemize}
	
For completeness, we mention that in the works~\cite{Ch1,Ch1b,Ch2a,Ch2} Christodoulou studied the classical spherical collapse where matter is described by a classical scalar field. In particular, exact solutions of the equations of General Relativity are considered in order to give certain sufficient bounds on the initial values on a complete regular null cone that imply the non-formation of trapped surfaces.
It turns out, however, that to our purposes the full strength of Christodoulou's results is not needed. Following his ideas, here we shall study our quantum matter  and classical gravitational fields on an initial null cone, but the formation of trapped surfaces in the future of such cone will be characterized in Section~\ref{influence} just through the Raychaudhuri equation. We mention, however, that our results are compatible with Christodoulou's ones, see Section~\ref{subsec:comments}.

The combination of the quantum nature of the field with the conditions which prevent the formation of trapped surfaces, yields to 
some constraints on the dimension of the region where measurements can be performed.
Namely we cannot detect anything contained within a certain spatial sphere. The characteristic distance emerging this way is of the order of the Planck length.
The problem of extending these considerations to non-spherically symmetric situations is much harder. Thus we cannot offer neither a refined version of the STUR proposed in \cite{DFR}, derived from exact solutions of the Einstein equations without the use of ill defined notions of energy, nor we can offer accordingly refined versions of the Quantum Conditions on spacetime coordinates proposed there.

However, the Basic Model of Quantum Spacetime of \cite{DFR} can be used as a more 
reasonable geometric background in an approximated approach to the study of 
corrections due to effects of Quantum Gravity. In particular it is interesting to 
study exact spherically symmetric solutions, as so far described, where we take as 
a source for the semiclassical equations the expectation value of the energy 
momentum tensor on Quantum Spacetime. In order to circumvent the problem of not having at our disposal
a version of Quantum Spacetime modeled on a general curved manifold, we adopt the expression of the energy momentum tensor which 
is obtained by generalizing to a curved spacetime the one 
calculated on the Basic Model of Quantum Minkowski Space.

More precisely, in Section~\ref{ref:cosmoapplication}, we shall use this idea to evaluate the influence of the noncommutative spacetime structure in a simple (and thus unrealistic) cosmological model, namely a flat Friedmann-Robertson-Walker spacetime filled with radiation.  To this end, in Section~\ref{subsec:quantumminkowski},
 we take into full account the universal limitation to length scales not smaller than the Planck length in a flat model, by using, in the expression for the energy momentum tensor of our scalar quantum field, the notion of Quantum Wick Product introduced in~\cite{BDFP:2003}. The energy density thus defined is then evaluated in a KMS state 
 describing the background radiation.
Then we obtain a generalization of this result to a flat Friedmann-Robertson-Walker spacetime by exploiting its
conformal isometry with Minkowski spacetime. 
Coupling the energy density so determined with the classical gravitational field, and solving the Einstein equations in the limit of small scale factor, we find that, even if the matter of the model is formed only by radiation (described by a conformal scalar field in a KMS state), the universe obtained in this way shows a phase of power law inflation.
While in the far future no significant modifications to the classical model are obtained, 
close to the initial singularity the Hubble parameter has the behavior $H \sim a^{-1/2}$, i.e.\ it decays much more slowly with respect to the scale factor $a$
 than what would  happen without considering the Quantum Spacetime effects (in which case $H \sim a^{-2}$).
Furthermore, as a byproduct of this analysis in the obtained toy model 
the horizon problem does not arise, see Section~\ref{subsec:quantumFRW}. 
We recall that the horizon problem of standard cosmology arises from the observation that there should be regions of the universe which were never in causal contact since the Big Bang, which seems to be in contrast with the homogeneity of the universe, as shown for instance by the Cosmic Microwave Background. 
Our result here, although obtained under oversimplifying assumptions, 
goes
in the same direction as those drawn at a heuristic level from a full use of the  {\it Principle of Gravitational Stability against localization of events}~\cite{KARPACZ2001, CORFU2006}, which point to a  {\it background dependence of the effective
Planck length}, through which  the a-causal effects which are typical of QFT on noncommutative spacetime may be transmitted over large distances 
(cf comments at the end of Section 3.2).

\section{Formation of trapped surfaces caused by spacetime 
localisation measurements}\label{sec:trapped}

In the present section we shall discuss, by means of the semiclassical Einstein 
equations, the influence of a localized quantum measurement on classical spacetime 
curvature. More precisely, we will consider a massless scalar quantum field $\phi$, 
modeling the measuring device, propagating on a curved spacetime manifold $M$, and 
we will estimate the change of the expectation value of the energy-momentum tensor 
of $\phi$ as a result of the measurement of an appropriate observable localized in a 
bounded region $\cO$ (Proposition~\ref{stima}). Afterwards, we shall use 
this 
estimate to find a sufficient condition on the size of $\cO$ under which a trapped 
surface arises in $M$ due to the backreaction of $\phi$, see Theorem 
\ref{singularity}. This condition expresses only a limitation on the resolution of 
the measuring device. Eventually this condition will be negated according to the 
{\it Principle of Gravitational Stability} and this will provide us with a lower 
bound on the extension of a region in which an event can be operationally 
localized.

Of course, such an analysis of the solutions of the semiclassical Einstein equations, without further restrictions, can be very complicated. For this reason we shall assume that our spacetime manifold $M$ is spherically symmetric with center of symmetry 
described by a world line $\ga$ also contained within $M$. 
For later purposes we shall also assume that $M$ has further standard nice properties like being globally hyperbolic.
Finally, the event we want to detect needs to be spherically symmetric too and furthermore it is taken to be centered around some point of $\ga$.

\subsection{The process of measurement of the 
localisation of an event: model of the quantum detector}\label{subsec:detector}

To begin with, we focus our attention on the preparation of an experiment designed to observe an event localized in a (spherically symmetric) region $\cO$ contained in $M$ and on the quantum effects involved in the process in particular.

A physical procedure which can realize the observation is the scattering process of 
light sent towards a fixed target localized in the region $\cO$. We shall call this 
{\it incoming light}. When the incoming light reaches the region $\cO$, it is 
scattered by the target (the event). Thus the region where the interaction takes 
place can be thought of as being contained within $\cO$. Finally the scattered 
(outgoing) light is measured by some observer localized far away from the target. From the result of the last observation it is in 
principle possible to 
reconstruct the shape of the target and thus detecting an event with some 
precision.

Notice that if we describe light by a quantum field, in the above procedure two 
quantum operations are involved. The first one is needed in order to prepare the 
incoming light in a state in which it is focused towards the target. The second is 
the measurement needed in order to detect the effects of the scattering far away 
from the target. Such an observation is ideally performed by an observer localized 
at future infinity.

We stress that already the incoming field, namely the light sent to the target, 
perturbs the background. If such light is focused too 
much or, equivalently, if 
the target is too small, a trapped surface can occur. In that 
case the scattered light cannot 
reach the observer localized at future infinity. We can thus concentrate ourselves 
on the first part of the measuring process. 
More precisely, we want to evaluate the effect on the curvature of
the focusing of incoming light, in order to
have control on the
formation of trapped surfaces.
Furthermore, both the incoming and outgoing fields 
can be considered as free as
 the interaction with the target is only localized within the region $\cO$.
 Therefore, according to the above observations, it is actually sufficient to assume that the quantum field
we use to model the experiment is a free field.

Let us start formalizing the preceding ideas. As argued above, for our purposes we 
can, for simplicity, describe the light, which is used to 
detect the target localized in 
$\cO$, with a free massless scalar quantum field $\phi$; namely 
a quantum field 
$\phi$ satisfying 
\beq\label{KG} \Box \phi =0\;. \eeq 
We shall quantize such a 
system by means of the algebraic method and therefore we associate to $M$ the 
$*-$algebra $\cA(M)$
 generated by the quantum $\phi(f)$ smeared with compactly supported smooth 
functions $f\in C_0^\infty(M)$. The quantization of such a system is a functorial 
procedure and can be thus completely solved just knowing the geometry of the 
spacetime, see the discussion in \cite{BFV} for a detailed analysis. For our 
purposes, we further suppose that the quantum field $\phi$ is in a quasi free state 
$\om$ (ground state) which is in equilibrium with the background and thus the semiclassical 
Einstein equations 
\beq
\label{eq:initialEinstein} G_{\mu\nu}=8\pi\; \om(T_{\mu\nu}) 
\eeq 
hold (for the definition of the stress-energy tensor $T_{\mu\nu}$ of the field 
$\phi$ we refer the reader to~\cite{BFK,Moretti03}). 

Now let us discuss the state perturbed by the incoming light.
We shall model it by applying $\phi(f)$ to the state $\om$, where $f$ is a real 
valued function supported in the region $\cO$. Notice that, due to the time slice 
axiom \cite{CF}, the region where the preparation of this state is performed can be taken to be any region in the past of $\cO$, 
 provided it contains  $\cO$ in its causal shadow.
 
We indicate by $\om_f$ the quantum state of the 
theory resulting from the above operation; it will be such 
that for every  
$A\in\mA(M)$, 
\beq\label{prepstate} 
\omega_{f}(A) = \frac{\omega \at\phi(f)\; A \;\phi(f)\ct 
}{\omega(\phi(f)\phi(f))}\;. 
\eeq 
The state $\om_f$ can be thought of as the prepared state.
Unfortunately, because of the Reeh-Schlieder theorem, the state $\om_f$ is perturbed everywhere and not only in the region causally connected with the support of $f$.
Actually, if we indicate by 
$(\cH,\pi,\Om)$ the GNS triple corresponding to $\om$,
 a localized perturbation will be realized 
by Weyl operators, namely 
$\Psi = e^{i\phi(f)} \Om\;.$
However, we argue that with a strictly local perturbation generated by Weyl operators, without further restrictions, it is not possible to obtain sensible results.
In fact, the obtained state $\Psi$ is a superposition of states and, in particular, in this superposition, the reference state is always present, actually
$$
(\Om,\Psi) = e^{-\frac{1}{2} \om_2(f,f)}\;.
$$
A more serious drawback lies in the strong continuity of $e^{i\phi(tf)}$
in the real parameter $t$; scaling down $f$, $\Psi$ converges to $\Omega$.
This problem can be avoided by putting restrictions on the energy content
of $\Psi$ \cite{DFR}; here we avoid as much as possible energy
considerations, and follow another route.
In fact, later on, after preparation of such a state, in order to detect particle density we shall use a detector $\cD$ which is normalized on the reference state $\Om$. This means in particular that it will be calibrated to give zero density on $\Om$. It is thus clear that when $\cD$ is tested on $\Psi$ it will give results which are not directly related with the particle density. 
In order to obtain reliable detection we should require the prepared state to be orthogonal to the background, as for example 
$$
\Phi  = \frac{1}{\at 1 - e^{-\mu(f,f)}  \ct} \at e^{i\phi(f)} \Om  - e^{-\frac{\mu(f,f)}{2}}  \Om \ct
$$
where $\mu(f,f)$ is the symmetric part of the two point function of the background state $\om$.
However, such a new state, constructed by means of a linear combination of states, does not enjoy the same nice localization properties of the perturbation as $e^{i\phi(f)}\Om$. Furthermore, the minimum value of the energy transferred to the system with this perturbed state does not differ from the one obtained with the simpler perturbation $\om_f$, Eq.~\eqref{stimaenergia} below.

Finally, we notice that, due to the poorer localization, for fixed total energy the energy density associated to the state~\eqref{prepstate} will be smaller than the one for the strictly localized state 
induced by $\Psi$. This entails that the limitations obtained assuming
~\eqref{prepstate} as a model of localized state will have to be necessarily satisfied by states
with better localization properties.

Here, we are interested in the change of the right hand side 
of~\eqref{eq:initialEinstein} as a result of the observation, and we introduce 
therefore the quantity \beq\label{eq:pertubTmn} \langle T_{\mu\nu}\rangle_{f,0} := 
\om_f ( T_{\mu\nu})-\om ( T_{\mu\nu})\;. \eeq An estimate of this difference is 
given in the following proposition. 

 \begin{proposition} \label{stima} 
Consider a globally hyperbolic spacetime $M$ and the algebra $\cA(M)$ generated by 
the real Klein Gordon field $\phi(f)$ satisfying equation \eqref{KG}. Equip 
$\cA(M)$ with a quasi free Hadamard state $\om$. Then for every real valued 
function $f \in C^\infty_0(M)$, we have the 
following inequality 
\beq\label{stimaenergia} \langle T_{\mu \mu}(x)\rangle_{f,0} 
\geq \frac{1}{2}\frac{|\pa_\mu \Delta(f)(x)|^2}{\om\at \phi(f)\phi(f)\ct} 
\eeq 
where $x$ is a generic point of $M$, $\mu$ is the index of a null direction at $x$ (i.e., $g_{\mu\mu}(x) = 0$) and $\De(f)\in C^\infty(M)$ is the image of $f$ 
under the causal propagator $\De$ on $M$. 
\end{proposition}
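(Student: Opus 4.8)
The plan is to compute the perturbed energy--momentum tensor expectation value explicitly using the definition~\eqref{prepstate} of $\om_f$ together with the two-point structure of the quasi-free state $\om$, and then to bound the resulting expression from below. First I would write $\langle T_{\mu\nu}\rangle_{f,0}$ in terms of $\om$-correlators. Since $\om$ is quasi-free, by Wick's theorem the four-point function $\om(\phi(f)\,\wick{\pa_\mu\phi\,\pa_\nu\phi}(x)\,\phi(f))$ decomposes into products of two-point functions: one term contracts the two $\phi(f)$'s with each other (producing $\om(\phi(f)\phi(f))$ times the unperturbed $\om(\wick{\pa_\mu\phi\,\pa_\nu\phi}(x))$, which cancels against $\om(T_{\mu\nu})$), while the remaining ``connected'' terms contract each $\phi(f)$ with one of the field factors at $x$. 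Because the point-splitting/Hadamard subtraction in the definition of $T_{\mu\nu}$ removes only the singular c-number part, it does not affect these connected contributions. Keeping in mind that $f$ is real and using that the symmetric (real) part of the two-point function drops out when one takes the combination defining $\om_f$ appropriately — more precisely, that the imaginary part is $\tfrac{1}{2}\De$ — I expect the connected contribution at the diagonal to reduce to $\dfrac{\big(\pa_\mu \De(f)(x)\big)\big(\pa_\nu \De(f)(x)\big) + (\text{symmetric-part terms})}{\om(\phi(f)\phi(f))}$, where $\De(f)$ solves $\Box\,\De(f)=0$ and hence is smooth.

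Next I would restrict to a null index $\mu$ with $g_{\mu\mu}(x)=0$ and look at the diagonal component $\langle T_{\mu\mu}(x)\rangle_{f,0}$. For a minimally coupled massless scalar the stress tensor is $T_{\mu\nu} = \pa_\mu\phi\,\pa_\nu\phi - \tfrac{1}{2} g_{\mu\nu}\,\pa^\alpha\phi\,\pa_\alpha\phi$ (there is no mass term, and the $\Box\phi$ term vanishes on shell); when $g_{\mu\mu}(x)=0$ the trace term disappears from the $\mu\mu$ component, leaving just $\om_f\big(\wick{(\pa_\mu\phi)^2}(x)\big)-\om\big(\wick{(\pa_\mu\phi)^2}(x)\big)$. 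The ``diagonal'' structure of the connected Wick contractions then yields a manifestly nonnegative leading piece $\dfrac{|\pa_\mu\De(f)(x)|^2}{\om(\phi(f)\phi(f))}$, possibly accompanied by a cross term involving the symmetric part $\mu$ of the two-point function. To get the stated factor $\tfrac{1}{2}$ and the inequality, I would argue that the remaining symmetric-part contribution, suitably combined, is either nonnegative (being itself a sum of squares, as $\mu$ is a positive bilinear form coming from a positive state) or can be absorbed, so that one is left with at least half of the square. Concretely, writing the connected four-point contribution as a sum over the two ways of pairing $\phi(f)$ with the two derivative factors, each pairing splits into $\tfrac{1}{2}\De$ plus the symmetric part; the genuinely $\De$--$\De$ cross term gives $|\pa_\mu\De(f)(x)|^2$ while the mixed $\De$--symmetric cross terms and the symmetric--symmetric term recombine, via positivity of $\om$, into a nonnegative remainder, and a conservative estimate discarding part of that remainder produces the coefficient $\tfrac{1}{2}$.

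The main obstacle I anticipate is precisely this last step: controlling the sign of the cross terms between the antisymmetric part $\tfrac{1}{2}\De$ and the symmetric part of the two-point function at coincident points, and in particular justifying that the Hadamard point-splitting prescription leaves the relevant finite contributions intact without generating extra curvature-dependent terms that could spoil positivity (on a general globally hyperbolic $M$ the Hadamard parametrix carries geometric data). I would handle this by noting that only the \emph{difference} $\om_f(T_{\mu\mu})-\om(T_{\mu\mu})$ appears, so the parametrix subtraction — being state-independent — cancels exactly, reducing everything to smooth coincidence limits of derivatives of $\De(f)$ and of the smooth symmetric two-point function smeared against $f$; then a Cauchy--Schwarz-type inequality in the one-particle space of $\om$, applied to the vector $\pa_\mu$ (derivative of the field at $x$) tested against the vector representing $\phi(f)$, delivers the bound with the explicit constant. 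I expect the bookkeeping of indices and of the $i$'s in the split $\om_2 = \mu + \tfrac{i}{2}\De$ to be the only genuinely delicate part of the argument.
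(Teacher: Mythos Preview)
Your strategy is essentially the same as the paper's, and it is correct, but you are making the last step harder than it needs to be. The ``cross terms'' you worry about simply do not arise. After the Wick expansion and the cancellation of the disconnected part, the perturbation reduces (for any real test function $\xi$) to
\[
\langle \phi(\xi)\phi(\xi)\rangle_{f,0}
= \frac{2\,\om_2(f,\xi)\,\om_2(\xi,f)}{\om_2(f,f)}.
\]
Since $f$ and $\xi$ are real, $\om_2(\xi,f)=\overline{\om_2(f,\xi)}$, so the numerator is exactly $2|\om_2(f,\xi)|^2$. Writing $\om_2=\om_{2,S}+\om_{2,A}$ with $\om_{2,S}$ real and $\om_{2,A}$ purely imaginary on real test functions, one gets
\[
2|\om_2(f,\xi)|^2 = 2|\om_{2,S}(f,\xi)|^2 + 2|\om_{2,A}(f,\xi)|^2,
\]
with no mixed terms at all. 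Dropping the manifestly nonnegative symmetric piece and using $\om_{2,A}(f,\xi)=\tfrac{i}{2}\langle\xi,\De(f)\rangle$ yields the bound $\langle \phi(\xi)\phi(\xi)\rangle_{f,0}\ge \tfrac{1}{2}|\langle\xi,\De(f)\rangle|^2/\om_2(f,f)$ directly; the factor $\tfrac12$ is automatic, and no Cauchy--Schwarz argument or appeal to positivity in the one-particle space is required.

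The paper then obtains the pointwise statement by letting $\xi$ run through a sequence $\xi_n\to\pa_\mu\delta_x$; because $g_{\mu\mu}(x)=0$ kills the trace term, the left side converges to $\langle T_{\mu\mu}(x)\rangle_{f,0}$, and the limit exists since $\om$ is Hadamard and $\De(f)$ is smooth. Your observation that the parametrix subtraction cancels in the difference $\om_f-\om$ is exactly right and is what makes this limit well defined; the curvature terms you feared never enter.
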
 

\begin{proof} In 
what follows we shall indicate by $\om_2(f,g)$ the two point function of $\om$ 
which is nothing but $\om(\phi(f)\phi(g))$. Let us start by considering another 
generic real compactly supported smooth function $\xi$. Then since the state $\om$ 
is quasi free, we obtain 
\beq \label{eq:difference} \langle \phi(\xi)\phi(\xi) \rangle_{f,0} := 
\om_f(\phi(\xi)\phi(\xi)) -\om(\phi(\xi)\phi(\xi))= 2\frac{\om_2(f,\xi) 
\om_2(\xi,f)}{\om_2(f,f)} \;. \eeq Since the field $\phi$ is real, we have 
$\om_2(\xi,f) = \overline{\om_2(f,\xi)}$, and hence 
$$ \langle \phi(\xi)\phi(\xi) 
\rangle_{f,0} = 2\frac{ | \om_2(f,\xi) |^2 }{\om_2(f,f)} = 2\frac{ | 
\om_{2,A}(f,\xi) |^2 + |\om_{2,S}(f,\xi) |^2}{\om_2(f,f)} $$ 
where $\om_{2,S}$, 
$\om_{2,A}$ are respectively the symmetric and antisymmetric part of $\om_2$, 
which, at the same time, correspond also to the real and imaginary part of $\om$ 
respectively. Furthermore, the previous chain of equalities ensures that $\langle 
\phi(\xi)\phi(\xi) \rangle_{f,0}$ is a real and positive quantity. We can estimate 
it by noticing that the antisymmetric part of the 
two-point function is given by
$$ \om_{2,A}(f,\xi) = 
\frac{i}{2}\langle\xi,\De(f)\rangle = 
\frac{i}{2}\int_M\xi\De(f)d\mu_g, $$ 
where $\mu_g$ is the volume measure on $M$ 
determined by the metric $g$. Therefore $$ \langle \phi(\xi)\phi(\xi) \rangle_{f,0} 
\geq \frac{1}{2}\frac{|\langle\xi,\De(f)\rangle|^2}{\om_2{(f,f)}}\;. $$ Consider 
now a sequence $(\xi_n)$ of real test functions in $C^\infty_0(M)$ converging 
weakly to $\pa_\mu\delta_x$, the derivative of the delta function supported on the 
point $x \in M$. The thesis of the proposition can then be proven by noticing that, since
$\mu$ is the index of a null direction,
the following limit holds 
$$ \lim_{n\to\infty} \langle \phi(\xi_n)\phi(\xi_n) 
\rangle_{f,0} = \langle T_{\mu\mu}(x) \rangle_{f,0} $$ 
(the limit in the left hand 
side exists thanks to formula~\eqref{eq:difference} and to the hypothesis that 
$\om$ is a Hadamard state) and that $\Delta(f)$ is a smooth function over $M$. 
\end{proof} 

We remark explicitly that the above result does not depend on any 
symmetry assumption about $M$ or $f$. Also, notice that the right hand side
of~\eqref{stimaenergia} can be interpreted as the $\mu$-$\mu$ component of the classical
stress energy tensor associated to the solution $\Delta(f)/\sqrt{\om(\phi(f)\phi(f)}$ of~\eqref{KG}.
This, in turn, can be viewed as the expectation value of the quantum stress energy tensor
in a coherent state as in~\cite{DFR}.

To close this section, we further remark that our inequality~\eqref{stimaenergia} can be seen as a simple case of the more general (null) quantum energy inequalities which have been widely studied in the past, as for example in 
\cite{Ford, Yurtsever, Verch, FR03} (see also~\cite{Fe12} for a comprehensive review).

\subsection{The influence on the curvature and appearance of trapped surfaces}\label{influence}

We are now interested in evaluating the influence of the measuring procedure 
described above on the curvature. In order to estimate the backreaction of the 
observation on the gravitational field we should solve the new equation 
\beq \label{eq:perturbedEinstein}
G_{\mu\nu}=8\pi\; \om_f(T_{\mu\nu}), 
\eeq 
where now $\om_f$ describes the incoming light in the prepared state introduced in 
\eqref{prepstate}. 
We remark that we shall consider $\om_f$ as the state resulting from a measurement performed over the fixed background metric which solves~\eqref{eq:initialEinstein}.
In other words, we shall use \eqref{eq:pertubTmn} to evaluate the effects of a measurement on the expectation value of the stress tensor and we shall consider it as source for gravity in \eqref{eq:perturbedEinstein}.
Precisely at this point we are considering the background metric as fixed.

Solving~\eqref{eq:perturbedEinstein} is of course a rather complicated task, but we can limit ourselves to discussing the backreaction following the localization measurement
merely in terms of conditions 
on the formation of trapped surfaces. In particular, we shall make use of the spherical symmetry in order 
to 
foliate the spacetime by forward pointing spherically symmetric light cones whose 
tips are on a worldline $\ga$. 
We shall select one of these cones $\cC_0$ in such 
a way that the target region $\cO$ (assumed to be open for definiteness) is 
contained in its causal future. Notice that $J^-(\cO)\cap \cC_0$, namely the causal 
shadow of the region $\cO$ on $\cC_0$, cannot be too small and it is controlled by 
the dimension of $\cO$, in the sense that a larger $\cO$ produces a larger shadow 
on $\cC_0$. In particular, we can choose $\cC_0$ arbitrarily close to $\cO$, 
meaning that, for a fixed $\cO$, we can minimize the size of $J^-(\cO)\cap \cC_0$. 
The next step will be to focus our attention on $\cC_0$ and on the Einstein 
equations restricted on it. One of those equations becomes a constraint which 
involves quantities defined intrinsically on $\cC_0$;  this constraint is nothing 
but the Raychaudhuri equation for the congruence of geodesics forming 
$\cC_0$~\cite{Wald}. For our purposes, and in order to extract the minimal length 
we are interested in, it will be enough to consider this constraint.

In order to construct the foliation mentioned above, it is convenient to 
parametrize a normal neighborhood of $M$ containing $\cO$ with the so-called {\bf 
retarded coordinates}. A detailed analysis of such a coordinate system is given for 
example in \cite{Poisson}. Here we shall briefly summarize 
its construction. Let us 
start by recalling that $\ga$ is the world line (a smooth timelike curve) 
describing the evolution of the center of the spatial sphere. We shall parametrize 
the points of $\ga$ by $u:\ga\to\bR$ which is the integral parameter of the forward 
pointing normalized tangent vector field. Let $\cC_u$ be the forward pointing light 
cone formed by all the null geodesics emanating from the point $u$ of $\ga$ and 
traveling towards the future.
The family $\{ \cC_u \}$ foliates the relevant part 
of the manifold $M$, and the target region $\cO$ too.
As a submanifold of $M$, 
$\cC_u$ is topologically $\bR^+\times \bS^2$. Furthermore, any null geodesic 
forming $\cC_u$ is determined by the standard angular coordinates of the unit 
two-sphere $\bS^2$ of the subspace of the tangent space to $M$ in $u$ orthogonal to 
the tangent vector to $\ga$. We shall parametrize such a null geodesic by an affine 
parameter $s$, such that $s$ is equal to $0$ on $u$ and such that the scalar 
product between the tangent vectors in $u$ to the geodesic considered and to 
$\gamma$ is one.
The collection of $s$ for various points on $\gamma$ and for various outgoing directions
forms a scalar field which is usually called {\bf retarded distance}, because it can
 be obtained also as
$s = \left.\partial \sigma/\partial u\right|_{\cC_u}$, where $\sigma$ is the halved squared geodesic 
distance between a point on $M$ and a point $u$ on $\ga$ and $\partial/\partial u$ is applied 
on the second point. 

The most generic spherically symmetric metric respecting this 
structure has the form 
\beq\label{metric} 
ds^2:= - A(u,s) du^2 - 2 ds du + r(u,s)^2  d\bS^2 
\eeq 
where $A(u,s)$ and $r(u,s)$ are spherically symmetric classical fields 
on $M$. Notice 
that the points of $M$ corresponding to a fixed pair $(u,s)$ span a two-sphere 
whose spatial area is equal to $4\pi r^2$. Moreover, where $\pa_s r$ is positive 
$r$ can be used as an alternative coordinate to determine a point on the null 
geodesic.
We stress that at fixed $u$ the relation between the retarded distance $s$ and the radius 
of the sphere $r(u,s)$ can be obtained knowing that $s$ is an affine parameter for the 
null geodesic under investigation.

Let us now come back to the main discussion and let us fix a null cone $\cC_0$ which contains the target region $\cO$ of the measuring process in its causal future $J^+(\cC_0)$. We shall check when  
 a trapped surface arises in $J^+(\cC_0)$ due to the perturbation on the metric induced by the state $\om_f$.
Since we are interested in $J^+(\cC_0)$, we can neglect the backreaction in the past of $\cC_0$.

In our setting, a necessary and sufficient condition for the formation of a trapped surface is the vanishing of 
the expansion parameter $\theta$ of the congruence of null geodesics forming the cones $\cC_u$~\cite{HL}.
A precise definition can be found in the book \cite{Wald}.
Here we need to evaluate the ``change'' of the expansion parameter due to the observation.
The equation governing the evolution of the expansion $\theta$ as a function of the affine geodesic parameter $s$ is the Raychaudhuri equation, namely
\beq\label{Rayc}
\dot{\theta}=-\frac{\theta^2}{2}-R_{ss}
\eeq
and it has to be supplemented by the initial condition
$$
\theta \sim \frac{2}{s} \quad \text{for }s \to 0^+.
$$
In~\eqref{Rayc} $R_{ss}$, assumed to be spherically symmetric, is evaluated at $(s,u)$, and the contributions $\om_{ab}\om^{ab}$ and $-\si_{ab}\si^{ab}$ which are usually present in the Raychaudhuri equation vanish both due to the initial conditions and to the spherical symmetry we have imposed.
The components of the Ricci tensor change due to the
observation and in particular
\beq\label{Rss}
R_{ss}=8\pi\; \om_f(T_{ss}) = 
8\pi \; \om ( T_{ss}) +8\pi \langle  T_{ss}\rangle_{f,0}
= R_{ss}^{(0)}+8\pi \langle  T_{ss}\rangle_{f,0}\,,
\eeq
where $R_{ss}^{(0)}$ is the curvature of the background metric and we have used equation~\eqref{eq:pertubTmn} to evaluate the perturbation induced by the observation on the state $\om$ over the fixed background metric. 
We have furthermore used the fact that, according to~\eqref{metric}, $g_{ss}=0$. 

We are now ready to introduce the main theorem of the present section which states that, under fairly general assumptions on the original state $\om$ and on $M$ (see the discussion in the next subsection), if 
the initial data for the matter are supported in a sufficiently small region on $\cC_0$, a trapped surface
arises in the future of $\cC_0$.
We stress once again that the initial data on $\cC_0$ mentioned in the theorem need to be interpreted as arising because of the localization experiment considered, namely because of the ``incoming light'' we are sending towards the target contained
in $\cO$.

\begin{theorem}\label{singularity}
Consider a spherically symmetric spacetime $M$ parametrized with the retarded coordinates and  the $*-$algebra $\cA(M)$ of
the massless minimally coupled free Klein Gordon field on $M$.
Let $\om$ be a quasi-free Hadamard state for $\cA(M)$ such that:
\begin{enumerate}
\item\label{it:semiclass} the semiclassical Einstein equations are satisfied by $\om$ and $M$;
\item\label{it:positive} $R^{(0)}_{ss}= 8\pi \om(T_{ss})$ is positive on $\cC_0$;
\item\label{it:continuity} there is a constant $C > 0$ such that the two-point 
function $\om_2$ of $\om$ fulfills, for every $f$ 
supported in $J^+(\cC_0)$,
\beq\label{continuity}
|\om_2({f,f})| \leq C \| s \psi_f\|_2 \| \pa_s (s\psi_f) \|_2 ,
\eeq
where now $\psi_f$ is equal to $\Delta(f)$, computed with respect of the unperturbed metric and restricted on ${\cC_0}$, and where $\|\cdot\|_2$ is the $L^2$ norm on $\cC_0$ with respect to the product measure of $ds$ with $d\bS^2$,  the standard measure of the unit two dimensional sphere. 
\end{enumerate}
Assume now that $f$ is a spherically symmetric function   
supported in a region $\cO\subset J^+(\cC_0)$, chosen in such a way that
$J^-(\cO)\cap \cC_0$ is formed by points in the past of a sphere of $\cC_0$ determined by the equation $s =s_2$.
Furthermore, suppose that there is an $s_1$, with $s_1 < s_2 < \frac{3}{2}s_1$,
such that
\beq\label{constraint}
\|\pa_s\psi_f\|^2_2\leq 8\pi\int_{s_1}^{s_2} |\pa_s\psi_f|^2 ds\,.
\eeq
Then there is a constant $\bar s := 1/\sqrt{12 C}$ such that if $s_2 < \bar s $
the expansion parameter $\theta$ of the congruence of null geodesics defining $\cC_0$ for the metric satisfying \eqref{eq:perturbedEinstein} vanishes on a sphere contained in $\cC_0$, and thus $J^+(\cC_0)$ contains a trapped surface.
\end{theorem}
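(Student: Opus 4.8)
The plan is to integrate the Raychaudhuri equation~\eqref{Rayc} once along $\cC_0$, treating the ``free'' range $0<s\le s_1$ and the ``bump'' range $s_1\le s\le s_2$ separately, and to argue by contradiction: suppose $\theta>0$ on all of $(0,s_2]$ and show that nonetheless $\theta(s_2)<0$. On the free range $R_{ss}=R^{(0)}_{ss}+8\pi\langle T_{ss}\rangle_{f,0}\ge 0$, by assumption~\ref{it:positive} and the positivity of the bound in Proposition~\ref{stima}; setting $v:=2/\theta$ (allowed while $\theta>0$) turns~\eqref{Rayc} into $\dot v=1+2R_{ss}/\theta^2\ge 1$, while $\theta\sim 2/s$ forces $v(0^+)=0$, hence $v(s)\ge s$ and in particular $\theta(s_1)\le 2/s_1$.

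On the bump range, dropping the nonnegative term $\theta^2/2$ and integrating~\eqref{Rayc} gives $\theta(s_2)\le\theta(s_1)-\int_{s_1}^{s_2}R_{ss}\,ds\le 2/s_1-\int_{s_1}^{s_2}R_{ss}\,ds$. By~\eqref{Rss}, Proposition~\ref{stima} with the null index $\mu=s$ (so $g_{ss}=0$), and spherical symmetry of $f$ — so $\psi_f=\Delta(f)|_{\cC_0}$ depends on $s$ only — one has $R_{ss}(s)\ge 4\pi|\pa_s\psi_f(s)|^2/\om_2(f,f)$ on $\cC_0$, and with the concentration hypothesis~\eqref{constraint},
\[
\int_{s_1}^{s_2}R_{ss}\,ds\ \ge\ \frac{4\pi}{\om_2(f,f)}\int_{s_1}^{s_2}|\pa_s\psi_f|^2\,ds\ \ge\ \frac{\|\pa_s\psi_f\|_2^2}{2\,\om_2(f,f)} .
\]
So it remains to prove the reverse Sobolev inequality $\|\pa_s\psi_f\|_2^2\ge 4\,\om_2(f,f)/s_1$.

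Because $f$ is supported in $\cO\subset J^+(\cC_0)$ with $J^-(\cO)\cap\cC_0$ in the past of $\{s=s_2\}$, only the advanced part of $\Delta$ survives on $\cC_0$, so $\psi_f$ is supported in $\{s\le s_2\}$ with $\psi_f(s_2)=0$; the elementary estimates $\|\psi_f\|_2\le s_2\|\pa_s\psi_f\|_2$ (Poincar\'e), $\|s\psi_f\|_2\le s_2\|\psi_f\|_2$, and $\|\pa_s(s\psi_f)\|_2\le\|\psi_f\|_2+s_2\|\pa_s\psi_f\|_2$ then give $\|s\psi_f\|_2\|\pa_s(s\psi_f)\|_2\le 2\,s_2^{\,3}\|\pa_s\psi_f\|_2^2$. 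Feeding this into the continuity bound~\eqref{continuity}, and using $s_2<\tfrac32 s_1$ (so $s_2^{\,3}/s_1<\tfrac32 s_2^{\,2}$) and $s_2<\bar s=1/\sqrt{12C}$,
\[
\frac{4\,\om_2(f,f)}{s_1}\ \le\ \frac{8C\,s_2^{\,3}}{s_1}\,\|\pa_s\psi_f\|_2^2\ <\ 12\,C\,s_2^{\,2}\,\|\pa_s\psi_f\|_2^2\ <\ \|\pa_s\psi_f\|_2^2 ,
\]
as required. Combining this with the two previous displays yields $\theta(s_2)<0$, contradicting $\theta>0$ on $(0,s_2]$; hence $\theta$ vanishes at some $s^\ast\in(0,s_2]$, and since $s_2<\bar s$ this sphere lies in the retarded-coordinate patch already chosen to contain $\cO$, so it is a trapped surface by the criterion of~\cite{HL}.

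The real work is making the three hypotheses dovetail with the right powers of $s$: positivity~\ref{it:positive} prevents $\theta$ from decaying faster than $2/s$ before the bump; the concentration bound~\eqref{constraint} together with $s_1<s_2<\tfrac32 s_1$ is what lets the global quantity $\|\pa_s\psi_f\|_2$ be recovered from the localized weight $\int_{s_1}^{s_2}R_{ss}$; and the single power of $s$ in the continuity bound~\eqref{continuity}, exactly compensated by the two Poincar\'e steps above, is what absorbs the normalization $\om_2(f,f)$ — this is precisely what fixes the threshold at $\bar s=1/\sqrt{12C}$, and I expect closing that chain of inequalities to be the only genuinely delicate point. A subordinate issue to flag is the self-consistency of the set-up: the estimate of Proposition~\ref{stima} is computed on the unperturbed background but inserted into the Raychaudhuri equation for the metric solving~\eqref{eq:perturbedEinstein}; this is legitimate here only because $\cC_0$, the affine parameter $s$ and the null congruence are, by construction, those of the fixed background, the backreaction entering solely through $R_{ss}$ via~\eqref{Rss}.
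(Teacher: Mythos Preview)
Your proof is correct and follows essentially the same route as the paper's. Both arguments rewrite the Raychaudhuri equation as $\frac{d}{ds}(1/\theta)=\tfrac12+R_{ss}/\theta^2$ (your $v=2/\theta$) to get $\theta(s_1)\le 2/s_1$ from $R_{ss}\ge 0$, drop $\theta^2/2$ on $[s_1,s_2]$, insert Proposition~\ref{stima} together with the concentration hypothesis~\eqref{constraint} and the continuity bound~\eqref{continuity}, and close via the same Poincar\'e-type estimates $\|\psi_f\|_2\le s_2\|\pa_s\psi_f\|_2$, $\|s\psi_f\|_2\le s_2\|\psi_f\|_2$, $\|\pa_s(s\psi_f)\|_2\le 2s_2\|\pa_s\psi_f\|_2$, arriving at $s_2\theta(s_2)\le 3-\tfrac{1}{4C s_2^2}$ and hence the threshold $\bar s=1/\sqrt{12C}$. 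Your explicit contradiction framing and your isolation of the key inequality $\|\pa_s\psi_f\|_2^2>4\,\om_2(f,f)/s_1$ are organizational choices; the paper simply chains the same inequalities and notes at the end that if $\theta$ were already nonpositive before $s_2$ a trapped surface would form anyway.
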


\begin{proof}
Let us start by writing the equation \eqref{Rayc} governing the evolution of the expansion parameter $\theta$ on $\cC_0$ in integral form:
$$
\theta(s_2)=\theta(s_1)-\int_{s_1}^{s_2} \frac{\theta^2}{2} ds-\int_{s_1}^{s_2} R_{ss} ds\,.
$$
We get immediately the following inequality
$$
\theta(s_2) \leq \theta(s_1)-\int_{s_1}^{s_2} R_{ss} ds\,.
$$
We can now use the expansion \eqref{Rss} and the fact that $R^{(0)}_{ss}$ 
is positive on $\cC_0$ (hypothesis~\ref{it:positive}) to write
$$
s_2\theta(s_2) \leq s_2\theta(s_1)
- 
 8\pi s_2\;\int_{s_1}^{s_2} \langle T_{ss} \rangle_{f,0}\;  ds\,.
$$
Rewriting the Raychaudhuri equation as
$$\frac{d}{ds}\frac{1}{\theta} = \frac{1}{2} + \frac{R_{ss}}{\theta^2},$$
from the initial condition for $\theta$ on $\ga$  and from the fact that $R_{ss} = R_{ss}^{(0)} + \langle T_{ss}\rangle_{f,0} \geq 0$ on $\cC_0$,
 we deduce that $\theta(s_1) \leq  2/{s_1}$. Thus, using the estimate given in \eqref{stimaenergia} and the continuity enjoyed by $\om_2$ on $\cC_0$ (hypothesis~\ref{it:continuity}) we have 
\begin{equation}\label{eq:inequality1}
s_2\theta(s_2) \leq 
 2 \frac{s_2}{s_1}
- 
\frac{s_2}{C}  \frac{4\pi\int_{s_1}^{s_2}|\pa_s\psi_f |^2ds}{\|s\psi_f \|_2\|\pa_s(s\psi_f)\|_2} \leq   2 \frac{s_2}{s_1}
- 
\frac{s_2}{2C}  \frac{\|\pa_s\psi_f \|^2_2}{\|s\psi_f \|_2\|\pa_s(s\psi_f)\|_2}  \;,
\end{equation}
where in the last inequality we have used \eqref{constraint}.
We now notice that, being $\psi_f$ smooth, there exists $\bar s \in (0,s_2)$ such that $\|\psi_f\|_\infty = |\psi_f(\bar s)|$, and therefore, again thanks to the support properties of $\psi_f$,
\begin{equation*}
\|\psi_f\|_2^2 \leq 4\pi s_2 |\psi_f(\bar s)|^2 = 4\pi s_2 \left|\int_{\bar s}^{s_2}\pa_s\psi_f ds\right|^2 \leq s_2^2 \|\pa_s\psi_f\|_2^2,
\end{equation*}
having used Cauchy-Schwarz in the last inequality.
Together with the inequality $\|s \psi_f \|_2 \leq s_2 \| \psi_f \|_2$ this implies
$\|\pa_s (s \psi_f) \|_2 \leq 2 s_2 \| \pa_s\psi_f \|_2$. Inserting then everything into equation~\eqref{eq:inequality1} and
recalling that by hypothesis $s_2/s_1 < 3/2$, we get
$$
s_2\theta(s_2) \leq 3 - \frac{1 }{4 C\,(s_2)^2} \;.
$$
Notice that from the last inequality we have that $\theta(s_2)$ is surely negative if $ (s_2)^2 < 1/(12 C), $
which is one of the 
hypothesis of the theorem. Furthermore, if the expansion is negative on $\cC_0$ at $s$ it remains negative also for every point of $
\cC_0$ in the future of $s$ and thus a trapped surface forms in $J^+{(\cC_0)}$. 
\end{proof}
Before proceeding with our discussion we briefly comment on 
the constraint~\eqref{constraint} imposed on the $L^2$-norm of $\pa_s\psi_f$.
Notice that, when the past directed null geodesics emanated from $\cO$ meet $\cC_0$ in the region determined by two constants $s_1$ and $s_2$ as in the theorem, the singularities of the causal propagator $\De$ restricted on $\cO\times \cC_0$ are contained within that region.
Since the dominant contribution to $\|\pa_s\psi_f\|_2$ comes from such singularities, one can expect that, in this situation, equation~\eqref{constraint} is satisfied.

We see therefore that if the incoming light is focused too much, namely when $\cO$ is too small, a trapped surface occurs.
Furthermore, up to some mild hypotheses, such a condition is independent on the shape of the incoming light, only the resolution of the detector is important.

Imposing now the Principle of Gravitational Stability against localization of events we conclude that the hypotheses of Theorem \ref{singularity} must be violated, and this implies, if $C \simeq 1$ (see appendix~\ref{Restriction} and the discussion in the next subsection), that the region $\cO$ containing the support of $f$, in which the event to be observed is localized, has to be at least of the size of the Planck length. 
This result generalizes to a curved setting the 
particular case of the 
STUR of~\cite{DFR} in which all the uncertainties are of the same size. In order to get a generalization of the full set of STUR it would be necessary to extend the previous analysis to the non-spherically symmetric case, a task which is beyond the scope of the present work.

\subsection{Comments}\label{subsec:comments}
Let us now briefly discuss the hypotheses adopted in 
the previous theorem.

First of all, let us recall that by the hypotheses~\ref{it:semiclass} and \ref{it:positive} of Theorem \ref{singularity} the unperturbed spacetime
satisfies a semiclassical Einstein equation and the $R^{(0)}_{ss}$ component
of the Ricci tensor is positive. 
Notice that it is possible to provide semiclassical models of quantum fields interacting with gravitation where both facts are satisfied
 at least when the background is a flat Robertson-Walker spacetime, namely when there is a single dynamical degree of freedom, the cosmological scale factor $a(t)$ given in terms of the cosmological time $t$. 
In fact, recently it has been proven that exact solutions of such semiclassical system  
 exist \cite{n}, and it is thus meaningful to assume that in the unperturbed background $R^{(0)}_{ss}=8\pi \om(T_{ss})$.
 Furthermore, in a Robertson-Walker spacetime
$$
R^{(0)}_{ss} = -2\frac{\dot {H}}{a^2},
$$
where $H$ is the Hubble parameter, i.e.\ the logarithmic derivative of the scale factor with respect to the cosmological time and where dot stands for the derivative in the cosmological time. 
Notice that in an expanding universe like the one in which we are living $\dot{H}$ is negative (as can be seen analyzing the equation of state of the universe). It is thus reasonable to assume $R_{ss}^{(0)}$ to be a positive quantity.

Hypothesis~\ref{it:continuity} in Theorem \ref{singularity} could appear as a strong requirement about the continuity properties enjoyed by the background state $\om$. We would like to stress that the continuity condition 
\eqref{continuity} is satisfied by the massless vacuum on a Minkowski background~\cite{BM} (see also  the appendix).
Furthermore, on a curved spacetime it is possible to construct states that have similar properties \cite{DPP}.

Of course, a different choice of normalization of the background state could alter somewhat the results we obtained here. For example it can be shown that a slightly more stringent choice, like 
\begin{equation}\label{strongercontinuity}
|\om_2({f,f})| \leq C \| r \psi_f\|_2 \| \pa_s (r\psi_f) \|_2 ,
\end{equation}
would result in a limitation for the radius $r$ of the area of minimal localization, similar to that obtained above for the affine parameter $s$.
In any case the two choices lead to results that agree at the first order, as $r(u,s) \sim s$ for $s \to 0$. 

It seems interesting to notice that both inequalities 
satisfied by $s_1$ and $s_2$ stated in the hypotheses of Theorem \ref{singularity}
are sufficient conditions for the formation of a singularity in the case of
 classical spherical collapse of matter described by a scalar fields,
 provided the support of $\psi_f$ is contained within $(s_1,s_2)\times\bS^2$.
 In fact, in a series of papers \cite{Ch1,Ch1b,Ch2a,Ch2} Christodoulou has studied the spherical collapse induced by a classical matter which is described by a scalar field minimally coupled with the curvature.
In those papers the equations governing the dynamics of the coupled matter-gravity system are cast in a form such that initial values for the gravity and matter fields are given on an initial surface that has the shape of a null cone ($\cC_0$ in the notation introduced above).

In \cite{Ch1} Christodoulou has given a 
condition for 
the initial values of $\phi$ on 
the cone $\cC_0$ which guarantees that no black hole forms in the future of $\cC_0$ and hence, that no trapped surfaces is contained within  $J^+(\cC_0)$. 
This condition essentially requires that the initial values of $\phi$ do not vary too much on $\cC_0$.

In a subsequent paper \cite{Ch2a} a condition on the initial data which guarantees the formation of trapped surfaces and hence of singularities is also given, which, in the geometric framework introduced in Theorem \ref{singularity}, can be expressed in terms of the radiuses  $r_1, r_2$ and the Hawking masses $m_1,m_2$ of two spheres $S_1$ and $S_2$ contained in $\cC_0$.
More precisely, there are two positive constants $c_0$ and $c_1$ such that if both
$$
\frac{r_2}{r_1}-1 \leq c_0 \;\qquad \text{ and } \qquad \frac{2\at m_2-m_1\ct}{r_2} \geq - c_1 \at\frac{r_2}{r_1}-1\ct \log \at
 {\frac{r_2}{r_1}-1} \ct $$ are satisfied, the future of $\cC_0$ contains 
a spacelike singularity. It turns out that these requirements
are both satisfied if $s_1$  and $s_2$ obey the inequalities stated in
Theorem \ref{singularity}.

\section{An application of Quantum Spacetime in a simple cosmological model}\label{ref:cosmoapplication}
The information we can extract from Theorem \ref{singularity} is that, 
if we want to avoid that the localization of an event causes the 
formation of a singularity, the localization of that event should not take 
place in small regions of spacetime.
At first order, an estimate of the size of those regions is given by the quantity $\overline{s}$ in Theorem 
\ref{singularity},
which is a length of the order of the Planck scale. 
Later on we shall be more precise on this point for the class of space-times we are going to describe.
In any case, on the basis of our comments in the preceding section regarding Eq.~\eqref{strongercontinuity},  we shall bound our analysis to the first order in $s$.

The natural scenario where these ideas can be tested is the cosmological one, where there is spherical symmetry with respect to every point, and thus the result obtained in the preceding section can be considered.
Of course, since the minimal length scale is constant in time, it is expected that the effects due to the noncommutative nature of spacetime become  important in the past, namely close to the Big Bang when the scale factor was comparable with such length scale and when 
the universe was very dense and very hot.
We are thus interested in understanding how the back reaction of matter in thermal states on curvature is modified by the introduction of the minimal length scale. 
The task we are facing is therefore to compute the modification of the energy density of quantum fields propagating on a cosmological spacetime due to the sharpest localization. 
We shall accomplish this task discussing the influence of the minimal length scale, obtained in the previous section, in a toy model consisting of an universe filled only with radiation. We are particularly interested in this influence close to the initial singularity which we call Big Bang.   

Our starting point will be the remark that the limitations to localizability have an effect in the evaluation of product of fields at the same point, like $\phi^2(x)$, 
which are objects appearing in the definition of the energy density. On Minkowski spacetime this was implemented, in~\cite{BDFP:2003}, by considering product of fields at different points, 
say $x$ and $y$, and using states of optimal localization
on the model of Quantum Spacetime of~\cite{DFR} in order to minimize the difference
$x-y$ in a way compatible with the commutation relations.

For our purposes it would be necessary to repeat the analysis presented in \cite{BDFP:2003} on a curved spacetime $M$.
This would require the introduction of a full set of commutation relations between coordinates on $M$, defining a noncommutative algebra $\sE$ which replaces the algebra of smooth functions on $M$, and of a full-fledged quantum field theory on $\sE$. However, because of the lack of a thorough analysis of the operational limitations
to localizability in more general (i.e.\ non-spherically symmetric) spacetimes, we can not commit ourselves
to a specific choice of commutation relations. Therefore, we adopt the strategy
of analyzing the effect of noncommutativity on the energy density of a quantum field first on Minkowski
spacetime, where the algebra $\sE$ is well known~\cite{DFR}, and then we will discuss a possible
extension of the result thus obtained to a flat Friedmann-Robertson-Walker (FRW) spacetime, namely a spacetime $M$ where the metric is
\beq\label{FRW}
ds^2=-dt^2+a(t)^2\aq dx_1^2+dx_2^2+dx_3^2,\cq
\eeq
where $a(t)$ is the scale factor and $t$ is the cosmological time.

Then this energy density will be used as source in the Friedmann equation in order to estimate its global effect on the curvature. The result is that, although the Big Bang singularity is still present in the past of the model, the 
scaling behavior of radiation density close to the singularity is significantly modified.
In this way, the resulting spacetime appears as a power law inflationary scenario.
Furthermore, because of this modification the initial singularity is represented by a lightlike surface.
Thus in such a spacetime every couple of points have been in causal contact at some time after the Big Bang, 
and hence, the horizon problem of the standard cosmological model is avoided.
We also mention that a similar result was obtained in~\cite{DFP}, as a consequence of a consistent renormalization of the energy density of the quantum field on a commutative spacetime.

\subsection{Energy density on Quantum Minkowski Spacetime}\label{subsec:quantumminkowski}

The C$^*$-algebra $\sE$ of Quantum Spacetime is generated by self-adjoint operators $q^\m, Q^{\m\n}$ subject to the following relations, in which $\la$ stands for the Planck length,
\begin{equation}\begin{split}\label{eq:qc}
[q^\m,q^\n] = i\la^2 Q^{\m\n}, \qquad [q^\rho, Q^{\m\n}]= 0,\\
Q_{\m\n} Q^{\m\n}=0,\qquad \Big(\frac{1}{4}Q^{\m\n}(*Q)_{\m\n}\Big)^2 = 1,
\end{split}\end{equation}
and given a suitable function $f$ on $M$ (e.g.\ $f\in C_0^\infty(M)$) one defines an element $f(q) \in \sE$ by
\begin{equation}\label{eq:fq}
f(q) := \int d^4 k \ahat{f}(k) e^{ikq},
\end{equation}
where $\ahat{f}(k) = \int \frac{d^4 x}{(2\pi)^4} f(x) e^{-ikx}$ is the inverse Fourier transform of $f$.

Consider now a free massless scalar field $\phi$ propagating on Minkowski 
spacetime, equipped with the standard Minkowskian metric 
$$ 
ds^2= - dt^2 + dx_1^2+dx_2^2+dx_3^2.
$$ 
Since the stress-energy tensor is built from 
products of (derivatives of) $\phi$ evaluated at the same spacetime point, 
we start by recalling the definition of the quantum diagonal map 
of~\cite{BDFP:2003}, which generalizes to the Quantum Spacetime the map of 
evaluation at coinciding points of a function $f(x,y)$ of two commuting 
variables. To this end, we consider the tensor product algebra $\sE^{(2)} 
= \sE \otimes_\sZ \sE$, $\sZ$ being the center of $\sE$, and we introduce 
the operators \begin{equation*} q_1^\mu := q^\mu \otimes 1, \qquad q_2^\mu 
:= 1\otimes q^\mu, \end{equation*} describing the quantum coordinates of two independent events.
Note that adopting the ``$\sZ$-bimodule tensor product'', rather than the 
usual tensor product over complex numbers, amounts to requiring that the 
commutator of the different components of the coordinates is the same for 
all events, that is  
\begin{equation}
[q_{2}^\m,q_{2}^\n] - [q_{1}^\m,q_{1}^\n] = i\la^2 ( 1\otimes Q^{\m\n} -  
Q^{\m\n}\otimes 1)  =  i\la^2  dQ^{\m\n} =  0
\end{equation}
(cf.~\cite{BDFP:2003, BDFP:2010}). Introducing furthermore the 
center of 
mass and relative distance coordinates 
\begin{equation}\label{eq;CenterMassRelative} \bar q^\mu := 
\frac{1}{2}(q_1^\mu + q_2^\mu), \qquad \xi^\mu := 
\frac{1}{\la}(q_1^\mu-q_2^\mu), \end{equation} the quantum diagonal map is 
given by the conditional expectation $E^{(2)} : \sE^{(2)} \to \bar \sE := 
C^*(\{ e^{i k\bar q}\,:\,k \in \bR^4\})$ defined, on a generic element 
$f(q_1,q_2) \in \sE^{(2)}$ by \begin{equation}\label{eq:conditional} 
E^{(2)}(f(q_1, q_2)) =\int d^4k_1 d^4k_2\, \ahat f(k_1, k_2)\ 
e^{-\frac{\la^2}{4}|k_1-k_2|^2}e^{i(k_1+k_2)\bar q}, \end{equation} where 
$|k|^2 = \sum_{\m=0}^3 k_\m^2$ is the squared Euclidean length of $k \in \bR^4$.

Using $E^{(2)}$
we will define the \emph{quantum
Wick square} of $\phi$ as
$$
:\phi^2:_Q(\bar q) := E^{(2)}(:\phi(q_1)\phi(q_2):). $$

At the same time the energy density is defined recalling the form of the $00$ component of the stress tensor, namely $T(\pa_t,\pa_t)$:
\beq\label{eq:density}
:\rho:_Q(\bar q) :=  E^{(2)}\Big(:\pa_0\phi(q_1)\pa_0\phi(q_2): -\frac{1}{2}\eta_{\mu\nu} :\pa^\mu\phi(q_1)\pa^\nu\phi(q_2):  \Big)\;.
\eeq
We shall now discuss the expectation value of these observables in suitable states.
More precisely we will consider, on the free field algebra, the KMS state
$\om_\beta$ at inverse temperature $\beta$, whose two point function is given by
\begin{align}\label{eq:KMS}
\om_\be(\ahat\phi(k_1)\ahat\phi(k_2)) &=\frac{1}{(2\pi)^3} \de(k_1+k_2)\de(k_1^2)\frac{\varepsilon(k_1^0)}{1-e^{-\be k^0_1}}.
\end{align}
The two point function of the vacuum state $\om_0$ can be obtained by considering the zero temperature limit of the previous expression.
In order to analyze the effects of the noncommutativity of spacetime on the expectation values of the Wick square $\phi^2$ and the energy density $\rho$, we evaluate the renormalized versions of these observables. Thus taking into account formulas~\eqref{eq:conditional}, \eqref{eq:KMS} we get
\begin{equation}\label{omegabeta}\begin{split}
\om_\be(:\phi^2:_Q(\bar q)) := (\om_\be-\om_0)(\phi^2_Q(\bar q)) &= \frac{1}{(2\pi)^3} \int d^4k \,\de(k^2)\frac{e^{-\frac{\la^2}{2}|k|^2}}{e^{\be|k^0|}-1} =\frac{1}{2\pi^2} \int_0^{+\infty}dk\,k \frac{e^{-\la^2k^2}}{e^{\be k}-1},
\end{split}\end{equation}
while for the renormalized energy density, defined in the same way as above, we get 
\begin{equation}\begin{split}\label{eq:quantumrho}
\om_\be(:\rho:_Q(\bar q))  &=\frac{1}{2\pi^2} \int_0^{+\infty}dk\,k^3  \frac{e^{-\la^2k^2}}{e^{\be k}-1}.
\end{split}\end{equation}
For our purposes it is important to pinpoint both the asymptotic form of $\rho$ for small and large $\la/\beta$:
$$
\om_\be(:\rho:_Q(\bar q)) \simeq C_1\frac{1-\frac{\la^2}{\beta^2} C_2}{\beta^4} \;,\quad \frac{\la}{\beta}  << 1 \qquad\text{while}\qquad
\om_\be(:\rho:_Q(\bar q)) \simeq C_3 \frac{1}{\beta \la^3} \;,\quad \frac{\la}{\beta}  >> 1\;,
$$ 
where $C_1$, $C_2$ and $C_3$ are fixed constants
\footnote{The numerical values of these constants are 
$C_1=\frac{\pi^2}{30}$,
$C_2= \frac{40}{21}\pi^2$ and 
$C_3=\frac{1}{8\pi^{3/2}}$.
}.

Notice that while for small $\la/\beta$ the effects of the noncommutativity of 
spacetime can be considered as a small correction, for large values of 
$\la/\beta$ the form of $\om_\be(:\rho:_Q(\bar q))$ appears to be completely 
different from its classical (i.e.\ commutative) counterpart.

\subsection{Backreaction on Quantum (FRW) Spacetime}\label{subsec:quantumFRW}

As discussed at the beginning of this section, we are now interested in solving a 
semiclassical Einstein equation where the effects of the non\-commutativity of 
spacetime are taken into account in the evaluation of the matter stress tensor, 
while the curvature is treated classically. In other words, the 
semiclassical 
Einstein Equations take the form
\beq\label{sem}
G_{\mu\nu}=8\pi \om (:T_{\mu\nu}:_Q).
\eeq
For simplicity, we shall further assume that the metric is of the form \eqref{FRW} and we shall consider the matter to be described only by a conformally coupled massless scalar field.
Hence, thanks to the large spatial symmetry, the equation is equivalent to the first Friedmann equation which looks like
\beq\label{Friedmann}
H^2(t)=\om(:\rho:_Q(\bar q)).
\eeq
In \eqref{sem} and \eqref{Friedmann} we have considered the matter in a suitable quantum state $\om$ and we have used the renormalized 
stress tensor or energy density
because we require that the limit $\la\to0$ should be equivalent to the semiclassical Einstein equation on classical spacetime
(recall that $\la$ is the parameter measuring the noncommutativity of spacetime). 
In other words, when $\la$ is very small, the expectation values of both stress tensor and the energy density are required to be finite.

Since, here, we are considering conformal matter, it is meaningful to chose the conformal vacuum as the reference state. At the same time, since the state describing today ordinary matter should be a thermal state with very low temperature, we shall take as the state $\om$ appearing in~\eqref{Friedmann} a conformal KMS state that we shall indicate by $\om^M_\beta$. 
We shall thus try to obtain the expectation value of the energy density in such a state generalizing the discussion given in the preceding section to a curved spacetime.
 
In order to do that, first we would like to give an estimate of the minimal length scale $\la$ under which localization cannot be achieved on FRW spacetime, and to check its dependence on time in particular. To this end, we shall specialize the construction of the retarded coordinates performed in Section \ref{influence} with respect to the worldline $\ga = \{ (t,0,0,0)\in M \,:\, t > 0\}$. It is straightforward to verify that the flat FRW metric (in spherical spatial coordinates) $ds^2 = -dt^2 + a(t)^2[dr^2+r^2d\bS^2]$, is reduced to the form~\eqref{metric}
through the change of variables
\begin{align}
u &= t(\tau(t)-r), \label{eq:changeu}\\
s &= \frac{1}{a(u)}\int_0^r \tilde a(\tau(u)+r')^2dr', \label{eq:changes}
\end{align}
where $\tau\mapsto t(\tau)$ is the change of coordinates which maps the conformal time $\tau$ to the cosmological time $t$, $t \mapsto \tau(t)$ its inverse, and where $\tilde a(\tau) := a(t(\tau))$.

We now  recall the results stated in Theorem \ref{singularity}, which says that we cannot localize objects in a small region $\cO$ contained within a given null cone $\cC_t$, described, in retarded coordinates, by the equation $u = t$. Actually the set of points of $\cC_t$ in causal contact with $\cO$ cannot be contained within the set  $\cR_t \subset \cC_t$ determined by the relation $s \leq \overline{s}$, where $\overline{s}$ is a constant that does not depend on time. From such a statement it is possible to estimate the minimal detectable length scale at any time by measuring the size of the set of points at fixed cosmological time $t$ which are in causal contact with $\cR_t$, namely the size of the region 
$J^-(\cR_t) \cap \Si_t$, where $\Si_t := \{(t',x_1,x_2,x_3)\in M \,:\, t'=t \}$.
Since we expect the latter to be very small, of the order of the Planck length, the corresponding coordinate $r$ will also be very small, at least if $a(t)$ is not too small too. In this approximation, eq.~\eqref{eq:changes} can be replaced by $s = a(u)r$, so that the $r$ coordinate of the upper border of $\cR_t$ is given by $\overline{r}(t) = \overline{s}/a(t)$. Therefore we see that the size of the region $J^-(\cR_t) \cap \Si_t$ can be estimated by $\la = 2 a(t)\overline{r}(t) = 2 \overline{s}$, and it is constant in time within our approximations. 
Later on we shall check that the results we are going to derive are consistent with this approximation.
To be precise, the approximated result can be made exact using the slightly more stringent continuity condition discussed
in \eqref{strongercontinuity}.

We will therefore assume from now on that the minimal localization length $\la$ is constant in time. Furthermore we will use the state $\om_{\beta_0}$ defined in \eqref{omegabeta} to construct
the corresponding state $\om^M_\beta$ for the conformally coupled massless scalar field in $M$ (via the pullback with respect
to the conformal isometry mapping $M$ into Minkowski space~\cite{PiConf}).
We notice that, since the considered field is conformally invariant,  the new state on Friedmann-Robertson-Walker spacetime 
appears as a conformal KMS state, namely it appears as a state which enjoys the KMS condition with respect to conformal 
time translations, which represent an accelerated observer. 
In this case the inverse temperature is $\beta_0$ while the 
physical temperature, often called Tolman inverse temperature\footnote{We refer the reader to \cite{Dixon} for the definition of the relativistic temperature.} needs to be rescaled by $a(t)$ and it is $\beta(t)=\beta_0 a(t)$.
On the other hand, if we use the point of view of Buchholz, Ojima and Roos \cite{BOR}, and we evaluate the temperature 
in $M$ using the expectation value of $\phi^2$ as a thermometer, we obtain that 
the physical inverse temperature $\beta(t)$ scales as $\beta_0 a(t)$, being $a(0)=1$, in agreement with the Tolman one. It must be stressed that the use of a thermal equilibrium state at early epochs of the Universe evolution is clearly an approximation, since the state of the Universe was quite far from equilibrium at those epochs; yet an approximation of this kind is commonly accepted, in view of the fact that for very small values of the scale factor thermodynamic equilibrium was easily established.

Therefore, in order to take into account the effect of the quantum nature of spacetime, we shall assume that in passing from Quantum Spacetime modeled on Minkowski spacetime to a Quantum Spacetime modeled on a FRW one, the only effective 
change  on the quantum field energy density is given by substituting $\beta$ in equation~\eqref{eq:quantumrho} with $\beta(t) = \beta_0a(t)$.
Thus we get, for the energy density $\rho_\beta$ in such a noncommutative spacetime,
$$
\rho_\beta(t):=\om^M_\be(:\rho:_Q(\bar q)) = \frac{1}{2\pi^2} \int_0^{+\infty}dk\,k^3  \frac{e^{-\la^2k^2}}{e^{\be(t) k}-1}.
$$

Suppose now to have an eternally expanding universe, and let us discuss the form of the energy density.
In the future, at a certain point, we will have that 
$
{\la}/{\beta}={\la}/{( \beta_0 a(t)) }
$
becomes much smaller than $1$, and in that regime the energy density looks like
$$
\rho_\beta(t)\simeq C_1\frac{1}{\beta_0^4 a(t)^4} \at 1-\frac{\la^2}{\beta_0^2 a(t)^2} C_2 \ct.
$$
Therefore the effect due to the noncommutativity of the underlying spacetime is very small and can actually be neglected.
On the contrary, when the universe was very small (close to the Big Bang) we have that $\la/\beta$ was much bigger than 1
and in that regime the energy density scales with respect to $a$ as
$$
\rho_\beta(t)\simeq  \frac{C_3}{\beta_0 a(t) \la^3}.
$$ 
Notice that this is less divergent, for $a\to 0$, than on the corresponding classical spacetime.

We have now all the ingredients in order to evaluate the backreaction close to the Big Bang, namely we can solve approximatively 
the semiclassical Friedmann equation \eqref{Friedmann}
which, thanks to the preceding discussion, takes the following 
very simple form in the limit of small $a$:
\beq\label{Hbigbang}
H^2(t)=  \frac{c}{a(t)}\;,
\eeq
where $c=(8\pi C_3)/(3\beta_0\la^3) $.
The solution of this equation is very simple too. It corresponds to a Power Law inflationary scenario, and thus to a spacetime which does not present the horizon problem, as can be seen by observing that the conformal time tends to $-\infty$ on the solutions of the previous differential equation when $a(t)$ tends to vanish in the past.
In order to make the last point clear, we shall conformally embed the cosmological spacetime in a Minkowski one and we shall 
analyze the form of the initial singularity therein.
Let us thus study the value of the conformal time close to the Big Bang singularity.
In an eternally expanding universe, the conformal time $\tau$ in the past of  a fixed cosmological time $t_0$ is 
given by
\begin{equation}\label{eq:powerlaw}
\tau_0-\tau=  \int_{t}^{t_0}   \frac{1}{a(t')}   dt'  =  \int_{a}^{a_0}   \frac{1}{{a'}^2 H(a')}   da' = \frac{1}{\sqrt{c}} \int_{a}^{a_0}   \frac{1}{{a'}^{3/2}}   da' = \frac{2}{\sqrt{c}} \at \frac{1}{\sqrt{a}} - \frac{1}{\sqrt{a_0}} \ct 
\end{equation}
where we have used $a$ as a time measure and the explicit expression of the Hubble constant $H$ obtained in \eqref{Hbigbang} close to the Big Bang.
Out of the preceding result we notice that at the Big Bang, namely when the scale length $a$ vanishes, the conformal time $\tau$ tends to $-\infty$.
Hence, the singularity is located at the past boundary of the conformally related Minkowski spacetime which is a lightlike singularity hypersurface.
Thus in this spacetime every couple of points have been in causal contact at some time in the past after the Big Bang, avoiding the horizon problem 
present in the standard cosmological models.

The preceding result has been derived assuming $\la$ constant in time. 
That it is actually consistent with such an assumption can be seen by
noticing that inserting~\eqref{eq:powerlaw} into eq.~\eqref{eq:changes}, recalling that $u = t$ on the null cone $\cC_t$, and setting $s = \overline{s}$, $r = \overline{r}$ entails
$$
\overline{s} =\frac{c}{3t^2}\at \frac{1}{\tau(t)^3}  - \frac{1}{(\tau(t)+\overline{r})^3} \ct,
$$
whose leading behavior, for $\tau$ that tends to $-\infty$, is $\overline{s} = \overline{r}/\tau^2\big(1+O(\overline{r}/\tau)\big) = \tilde a(\tau) \overline{r}\big(1+O(\overline{r}/\tau)\big)$.
Thus the assumption that $\la=2a(t) \overline{r}(t)$ holds also close to the Big Bang.

Our conclusions, eq.~\eqref{eq:powerlaw}, agree with the heuristic argument~\cite{KARPACZ2001, CORFU2006} which suggests to modify the Planck length in~\eqref{eq:qc} by, as a rough approximation,  the factor $g_{00}^{-1/2}$; the  minimal distance between two events \cite{BDFP:2010} would then be modified accordingly. Such a rough argument points too to an infinite extension of non local effects near a singularity, where  $g_{00}$ vanishes; so that, near the ``Big Bang",  thermal equilibrium would have been  established globally.

We would like to conclude this section with a more heuristic argument
which supports the results we presented here above.
In the case of spherically symmetric background and of localization of an
event with the same symmetry, an argument which is not based on the linear approximation (but is still 
obviously heuristic) can be outlined as follows.

Suppose that our background state describes the distribution of the total
energy $E$ within a sphere of radius $R$, with $E  <  R$. If we localize,
in a spherically symmetric way, an event at the origin with space accuracy
$a$, due to the Heisenberg Principle the total energy will be of the order
$1/a  +  E$. We must then have
$$
\frac{1}{a}  +  E   <   R,
$$
otherwise our event will be hidden to an observer located far away, out of
the sphere of radius $R$ around the origin.
Thus, if $R  -  E$ is much smaller than $1$, the ``minimal distance" will
be much larger than $1$.
But if $a$ is anyway larger than $R$ the condition implies rather
$$
\frac{1}{a}  +  E   <   a.
$$
Thus, if $R  -  E$ is very small compared to $1$ and $R$ is much larger
than $1$, $a$ cannot be essentially smaller than $R$.
This naive picture suggests too that, due to the principle of
Gravitational Stability,  initially all points of the Universe should have been
causally connected.

\section{Final comments and outlook}

In this paper we have analyzed some bounds on the quantum nature of spacetime 
assuming the \emph{Principle of Gravitational Stability against localization of 
events}, i.e.\ that by just observing 
the localization of an event it should not be possible to 
create spacetime singularities. 

We have actually seen that a natural minimal 
length scale of the order of the Planck scale appears in this way. This result is 
of course not new, and it is actually already at the basis of the spacetime 
uncertainty relations of~\cite{DFR}, which in turn can be implemented by assuming 
that classical spacetime is replaced by a suitable noncommutative manifold. 
However, here we have derived such length scale by solving exactly part of the 
semiclassical Einstein equations. 

Thus, even if our analysis is bound to the 
spherically symmetric scenario, we have found a result which does not hold only 
in flat spacetime. Unfortunately, from this result alone, it is not possible to 
deduce the commutation relations of the quantum coordinates of events in a curved 
spacetime.

Nevertheless, in the last part of the paper, we have used that length scale in order to evaluate the influence of the quantum nature of spacetime on some expectation values of products of fields, by generalizing to a flat FRW spacetime the result obtained on Minkowski spacetime, where states of optimal localization are used
to define the product.
We have actually seen that, considering a simple cosmological model where the matter is described by a conformally coupled scalar field which mimics ordinary radiation, the scaling behavior of the energy density is significantly modified.   
Hence, 
taking into account the back reaction on the curvature of the modified stress tensor 
close to the initial singularity, a power law inflationary scenario arises.
Furthermore, in this simple model the form of the initial singularity changes in such a way that the usual horizon problem disappears.
We stress the fact that this result is not a consequence of a particular choice of the dynamics of the considered field, as in standard inflationary models. Rather, our field being simply a free one, it appears just as a consequence of the quantum nature of spacetime 
 which implies the existence of a new length scale, namely Planck length. Because of this, it can be expected that this feature is preserved also when the other approximation employed here, the use of a KMS state for the free field, is removed.

Up to now, this last observation is merely the result of 
extrapolations, which  employ the theory of 
quantum fields on classical curved spacetime together with the idea of optimal 
localization induced by the quantum nature of spacetime. In order to obtain 
further and more stringent results in this direction it seems necessary to 
address the problem of the construction of a Quantum Spacetime modeled on a 
general curved manifold and of a full-fledged quantum field theory on it.

\section*{Acknowledgments}
It is a pleasure to thank R. Brunetti and V. Moretti for 
useful discussions about the content of this paper and also L. Tomassini and S. Viaggiu for their comments on the first version.

The work of N.\ P.\ and G.\ M.\ has been supported in part by the ERC Advanced Grant 227458 ``Operator Algebras and Conformal Field Theory''.

\appendix

\section{Restriction of the vacuum state of a massless theory to a null cone in Minkowski spacetime}\label{Restriction}

In this appendix we would like to show that the continuity condition \eqref{continuity}, given in hypothesis 3 of Theorem 
\ref{singularity}, holds for the Minkowski massless vacuum $\om$. These computations are inspired by work in progress of R.~Brunetti and V.~Moretti~\cite{BM}, which is our pleasure to thank for discussing with us their results. We reproduce them here for the convenience of the reader.

Let us specify the following coordinate system on 
Minkowski spacetime 
$$
ds^2=-dv du +r^2 d\bS^2,
$$ 
where $v$ and $u$ are null coordinates and $r=(v-u)/2$. Furthermore, we have chosen the 
coordinates in such a way that the cone $\cC_0$ is the set of points that 
satisfy the conditions $v>0$ and $u=0$. We are thus interested in giving a 
continuity condition for $\om_2(f,g)$, i.e., the two-point function of the Minkowski vacuum of a massless 
scalar field evaluated on two smooth functions 
$f$ and $g$ supported in the future of the null cone $\cC_0$. We recall now that the form of its 
integral kernel is 
$$ 
\om_2(x,y) := \lim_{\epsilon \to 
0^+}\frac{1}{4\pi^2} \frac{1}{\sigma(x,y) + 2i\epsilon \at x_0-y_0 \ct }, 
$$ 
where 
$\sigma$ is the squared geodesic distance between $x$ and $y$ and $2 
x_0=u+v$. We would like to write $\om_2(f,g)$ as a functional which 
acts on functions defined on $\cC_0$. To this end, 
let us introduce $\Omega_\Si$, the standard symplectic form computed on 
some Cauchy surface $\Si$, and let $\psi_f=\Delta(f)$ and 
$\psi_g=\Delta(g)$, namely the solutions of the wave equation associated 
with $f$ and $g$. We can now introduce the operator 
$$ 
W_\Si(\psi_g)(x):= 
\Om_\Si(\om_2(x,\cdot),\psi_g), 
$$ 
which maps real smooth solutions of the 
wave equation in real smooth solutions. With this operator, we can write 
\beq\label{statobordo} \om_2(f,g)=\Om_\Si(\psi_f,W_\Si(\psi_g))\;. \eeq 
Notice that both $W_\Si$ and $\Om_\Si$ do not depend on the particular 
choice of the Cauchy surface $\Si$. 
Using Stokes Theorem, we can thus 
deform the hypersurface $\Si$ in such a way that it coincides with $\cC_0$ 
at least on $S=(J^-(\mathrm{supp}\, f) \cup J^-(\mathrm{supp}\, g))\cap \cC_0 $. 
Furthermore, because of the Huyghens principle the tip of the cone is not 
contained in $S$ and thus the support of the integrand in $W_\Si$ is 
bounded. %
Let $\psi_1$ and $\psi_2$ be two compactly supported smooth 
functions on $\cC_0$, we can then explicitly write their symplectic 
product as $$ \Om_{\cC_0}(\psi_1,\psi_2):= \frac{1}{4} \int_{\cC_0} \aq 
\frac{\pa\at v \psi_1\ct}{\pa v} v \psi_2-
 v \psi_1 \frac{\pa\at v \psi_2\ct}{\pa v} \cq dv\;d\bS^2= \frac{1}{2} 
\int_{\cC_0} \frac{\pa\at v \psi_1\ct}{\pa v} v \psi_2 \; dv\;d\bS^2, $$ 
where $d\bS^2=\sin \theta d\theta d\varphi$ is the standard measure on the 
unit sphere. With this in mind, we can from now on consider $W_{\cC_0}$ as 
a map from $C_0^\infty(\cC_0)$ to $\cD'(M)$ defined as 
\begin{gather*} W_{\cC_0}(\psi)(x):= 
\lim_{\epsilon\to 0^+ } \frac{1}{2\pi^2}\int_{\cC_0} \frac{\pa}{\pa{v'}} \at r' 
\frac{1}{\sigma_\epsilon(x,x')} \ct \psi(x' ) r' \sin(\theta') dv'd\theta' 
d\varphi' \;, \end{gather*} where the derivatives are taken in the weak 
sense and $x=(u,v,\theta,\varphi)$, $x'=(v',\theta',\varphi')$, $r'=v'/2$. 
In order to simplify that expression, 
let us start by recalling the explicit expression of 
$\sigma_\epsilon(x,x')$ which can be written as $-(u-i\epsilon ) 
(v-v'-i\epsilon ) + rr'(1-\cos\theta') $. Here 
we have chosen the 
spherical coordinates of $x'$ in such a way that when $\theta'=0$ the 
angle between $x$ and $x'$ vanishes. 
Inserting this in the previous equation we obtain \begin{gather*} 
W_{\cC_0}(\psi)(x)= \lim_{\epsilon\to 0^+ } \frac{1}{2\pi^2\; r}\int_{\cC_0} 
\frac{\pa}{\pa{v'}} \frac{\pa}{\pa \theta'} \log \big[ -(u-i\epsilon ) 
(v-v'-i\epsilon ) + rr'(1-\cos\theta')  \big]
  \psi(x' ) r' dv'd\theta' d\varphi'  \;.  
\end{gather*}
Now we shall integrate by parts in the $\theta'$ variable. 
Thus we end up with two boundary terms and an integral,
namely
\begin{gather*}
W_{\cC_0}(\psi)(x)=  - \lim_{\epsilon\to 0^+ }     
\frac{1}{2\pi^2\; r}\int_{\bR^+\times\bS^1}   \frac{\pa}{\pa{v'}}
\at  \log (u-i\epsilon ) +  \log (v'-v+i\epsilon )    \ct 
  \psi(v',0,\varphi' ) r' dv' d\varphi'  \; 
  \\
 + \lim_{\epsilon\to 0^+ }     
\frac{1}{2\pi^2\; r}\int_{\bR^+\times\bS^1}   \frac{\pa}{\pa{v'}}
\log  \big[  -(u-i\epsilon ) (v-v'-i\epsilon ) + 2 rr'   \big]
  \psi(v',\pi,\varphi' ) r' dv' d\varphi'  \; 
  \\ 
  - \lim_{\epsilon\to 0^+ }
  \frac{1}{2\pi^2\; r}\int_{\cC_0}  
  \frac{\pa \psi(x' )r'}{\pa \theta'} 
  \frac{\pa}{\pa{v'}}
 \log  \big[  -(u-i\epsilon ) (v-v'-i\epsilon ) + rr'(1-\cos\theta')   \big]
    dv'd\theta' d\varphi' \;.  \end{gather*} In the first integral, 
$\theta'=0$ correspond to the standard polar coordinate singularity, hence 
$\psi(v',0,\varphi')$ does not depend on $\varphi'$. We can thus perform 
the integration in $d\varphi'$. Moreover, after taking the $v'$-derivative, we 
change the angular coordinates for $x'$ by means of a rotation in order to 
have the same angular coordinates for $W_{\cC_0}(\psi)$ and for $\psi$ and 
we obtain \begin{gather}\label{restriction} 
W_{\cC_0}(\psi)(u,v,\theta,\varphi)= - \frac{1}{2\pi\;r} \lim_{\epsilon\to 
0^+ } \int_{\bR^+}
 \frac{\psi(v',\theta,\varphi ) v'}{(v'-v+i\epsilon )}    
    dv'  \; + I_B(u,v,\theta,\varphi)+ I(u,v,\theta,\varphi),  
\end{gather}
where $I_B$ and $I$ are the contributions due to the last two integrals.
We shall now use this expression to  evaluate $\om_2(f,g)$ as $\Om_{\cC_0}(\psi_f,W_{\cC_0}(\psi_g))$. 
Notice that the contributions to $\om_2(f,g)$ due to both $I$ and $I_B$ in 
$W_{\cC_0}(\psi_g)$ vanishes. We shall prove it for $I$ in some detail, the case involving $I_B$ can be dealt with in an analogous way.
To this end we need to consider the restriction of $W_{\cC_0}(\psi_g)$
 on $\cC_0$ and in particular
$$
I(0,v,\theta,\varphi):=  \lim_{u\to 0 } \lim_{\epsilon\to 0^+ }
  \frac{1}{2\pi^2\; v}\int_{\cC_0}  
  \frac{\pa^2 \psi_g(x' ) v'}{\pa v' \pa \theta'} 
 \log  \big[  -(u-i\epsilon ) (v-v'-i\epsilon ) + rr'(1-\cos\theta')   \big] 
    dv'd\theta' d\varphi'  \,
$$ 
where we have performed a $v'$-integration by parts whose boundary 
terms vanish because of the support properties of $\psi_g$. Since 
the $\log\big(rr' (1-\cos\theta')\big)$ is integrable, we can take the 
limit in the opposite order in the expression 
above. In this way, we obtain 
$$
I(0,v,\theta,\varphi)= 
 \int_{\cC_0}  
\frac{1}{2\pi^2\; v}  \frac{\pa^2 \psi_g(x' ) v'}{\pa v' \pa \theta'} 
  \aq \log v  + \log v' + 
 \log  (1-\cos\theta') \cq   
    dv'd\theta' d\varphi'  \;.  
$$
Notice that $v I(0,v,\theta,\varphi)$ is constant in $v$ and thus, since $\pa_v \at v I\ct=0$, it cannot contribute to $\Om_{\cC_0}(\psi_f,W_{\cC_0}(\psi_g))$. 
By a similar argument, the same conclusion can be drawn also for $v I_B(0,v,\theta,\varphi)$.
We end up with 
$$
\om_2(f,g)=   \frac{1}{2\pi} \lim_{\epsilon\to 0^+ }     
\int_{\bR^+\times\bR^+\times\bS^2}  
 \frac{ \psi_f(v,\theta,\varphi)v\;\psi_g(v',\theta,\varphi ) v'}{(v'-v+i\epsilon )^{2}}    
    dv dv' d\bS^2\;. $$ The last expression has been already studied in 
the literature, see for example \cite{DMP,DPP,Moretti06}.
It gives 
rise to a distribution which enjoys the following continuity condition: 
$$ 
|\om_2(f,g)| \leq \frac{1}{2\pi} \at \| v \psi_f \|_2\; \| \pa_v (v \psi_g) \|_2 + 
\|v \psi_g \|_2 \; \| \pa_v (v \psi_f)  \|_2 \ct,
$$ 
where the norms on the 
right hand side are the $L^2(\cC_0,dvd\bS^2)$ norms. This holds for every $f$ and $g$ with compact support 
contained in the future of $\cC_0$, and thus it reduces to equation 
\eqref{continuity} for the specific case.

\end{document}